\newtheorem{theorem}{Theorem}[section]
\newtheorem{lemma}[theorem]{Lemma}
\newcommand{\forceindent}{\leavevmode{\parindent=2em\indent}}
\newcommand{\supperforceindent}{\leavevmode{\parindent=3em\indent}}
\begin{document}

\title{Analytical and Simulation Performance of a Typical User in Random Cellular Network}
\author{Sinh Cong Lam        \and
Kumbesan Sandrasegaran  \\
Centre for Real-Time Information Networks, Faculty of Engineering and Information Technology, \\
University of Technology, Sydney, Australia
}

\maketitle

\begin{abstract}
Spatial Poisson Point Process (PPP) network, whose Base Stations (BS)s are distributed according to a Poisson distribution, is currently used as a accurate model to analyse the performance of a cellular network. Most current work on evaluation of PPP network in Rayleigh fading channels are usually assumed that the BSs have fixed transmission power levels and there is only a Resource Block (RB) or a user in each cell. In this paper, the Rayleigh-Lognormal fading channels are considered, and it is assumed that each cell is allocated $N$ Resource Blocks (RB) to serve $M$ users. Furthermore, the serving and interfering BS of a typical user are assumed to transmit at different power levels. The closed-form expression for the network coverage probability for both low and high SNR is derived by using Gauss-Legendre approximation.  The analytical results indicates that the performance of the typical user is proportional to the transmission power and density of BSs when $SNR<10$ dB and $\lambda<1$, and reaches the upper bound when $SNR>10$ dB or $\lambda>1$. The variance of Monte Carlo simulation is considered to verify the stability and accuracy of simulation results.

\textit{Index Terms}: random cellular network, homogeneous cellular network, coverage probability, frequency reuse, Rayleigh-Lognormal.
\end{abstract}

\section{Introduction}
In  Orthogonal Frequency-Division Multiple Access (OFDMA) multi-cell networks, the main factor that has a direct impact on  the system performance is InterCell Interference (ICI) which is caused by the use of the same frequency resource  in adjacent cells at the same time. InterCell Interference Coordiantion (ICIC)  techniques have been introduced as an effective technique that can significantly mitigate the ICI and improve users' performance, especially for users experiencing low  Signal-to-Interference-plus-Noise Ratio (SINR).

The two dimensional (2-D) traditional hexagonal network model with deterministic BS locations is the most popular  model that is used to analyze a cellular network. In this model, a service area is divided into several hexagonal cells with \textit{same} radius and each cell is served by a BS which is often located at the center of the cell. Tractable analysis was often achieved for a fixed user with limited number of interfering BSs or in case of ignoring propagation pathloss \cite{Tse}. Another  tractable and simple model is the Wyner model \cite{Wyner} which was developed by information theorists and has been widely used to evaluate the performance of cellular networks in both uplink and downlink directions. In Wyner and its modified models, users were assumed to have fixed locations and interference intensity was assumed to be deterministic and homogeneous. However, for a real wireless network, it is clear that users' locations may be  fixed sometimes, but interference levels vary moderately  depending on several factors such as receiver and transmitter locations, transmission conditions, and the number of instantaneous interfering BSs. 
Hence, these models are no longer accurate to evaluate the performance of multi-cell wireless networks, thus the PPP network model has been proposed and developed as the accurate and flexible tractable model for cellular networks \cite{PPPfirst,Andrews}. 

In PPP model, the service area is partitioned into non-overlapping Voronoi cells \cite{Andrews} in which the number of cells is a random Poisson variable. Each cell is served by a unique BS that is located at its nucleus. Users are distributed as some stationary point process and allowed to connect with the strongest or the closest BSs. In the strongest model, each user measures SINR from several candidate BSs and selects the BS with the highest SINR.  In the closest model, the distances between the user and BSs are estimated, and the BS which is nearest to the user is selected. In this work, we assume that each user associates with the nearest BS.

The PPP network performance can be evaluated by coverage probability approach \cite{Andrews} and Moment Generating Function (MGF) approach \cite{DiRenzo2013}. Coverage probability approach was proposed to calculate the coverage probability and capacity of a typical user that associates with its nearest base station \cite{Andrews}, and then  extended for PPP network enabling frequency reuse \cite{Dhillon2012}. In these work, the closed-form expressions were evaluated  by ignoring Gaussian noise and only in Rayleigh fading.  The closed-form expression for coverage probability is yet  to be investigated and developed for a composite Rayleigh-Lognormal fading channel. MGF approach was proposed in \cite{DiRenzo2013} to avoid the complexity of coverage probability approach. By using this approach, the authors derived the average capacity of a user in a simple PPP network with generalized fading channels. The final equations, however, were not simple because they contained the Gauss hypergeometric function \cite{Stegun1972} which is expressed as an integral.

Some work that evaluated the effects of Rayleigh and shadowing were considered in \cite{Keeler,Xiaobin}. However, in \cite{Keeler}, shadowing was not incorporated in channel gain and assumed to be constant when the origin PPP model is rescaled. Instead of rescaling the network model, authors in \cite{Xiaobin} introduced a new approach to derive the mathematical expression for coverage probability for  PPP network neglecting noise. 

In most of papers, it was assumed that each cell had either a user or a single RB, and all BSs have same power and transmit continuously. These assumptions led to the fact that the neighbouring BSs always created ICI to a typical user. Hence, the impacts of  scheduling algorithms such as Round Robin on network performance were not clearly presented. Furthermore, in all papers that discussed above, the expressions of coverage probability were only presented in the close-form expression in the case of high SNR or neglecting Gaussian noise, otherwise they were presented with two layer integrals which could not be evaluated. 

In this paper, it is assumed that each BS is allocated $N$ RBs to serve $M$ users and has different transmission power. These assumptions are  relevant to the practical network because in cellular networks, the transmission powers of BSs in different tiers such as macro, pico and fermto, are significantly different. Even, the transmission powers of BSs in a given tier still vary and depend on the location or transmission condition. The closed-form expression for coverage probability of a typical user in the closest PPP network model is derived by using coverage probability approach and Gauss-Legendre approximation. A simple part of this paper was presented in \cite{SinhCongLam2015} with assumptions that there is only a RB and a user in the network and all BSs have same transmission power. Furthermore, in this paper, the variance of simulation results is presented to confirm the stable and accuracy of simulation programs.

\section{System model}
Homogeneous Poisson model of wireless network is the simplest PPP  model with a single hierarchical level. In this model, the service area is partitioned into non-overlapping Voronoi cells \cite{PPPfirst,Andrews} in which the number of cells is a random Poisson variable. Each cell is served by a unique BS that is located at its nucleus (see Figure \ref{fig:PPPnetwork}). Users are distributed as some stationary point process and allowed to connect with the closest BSs. 

\begin{figure}[h!]
\includegraphics[width=9cm,height=6.5cm]{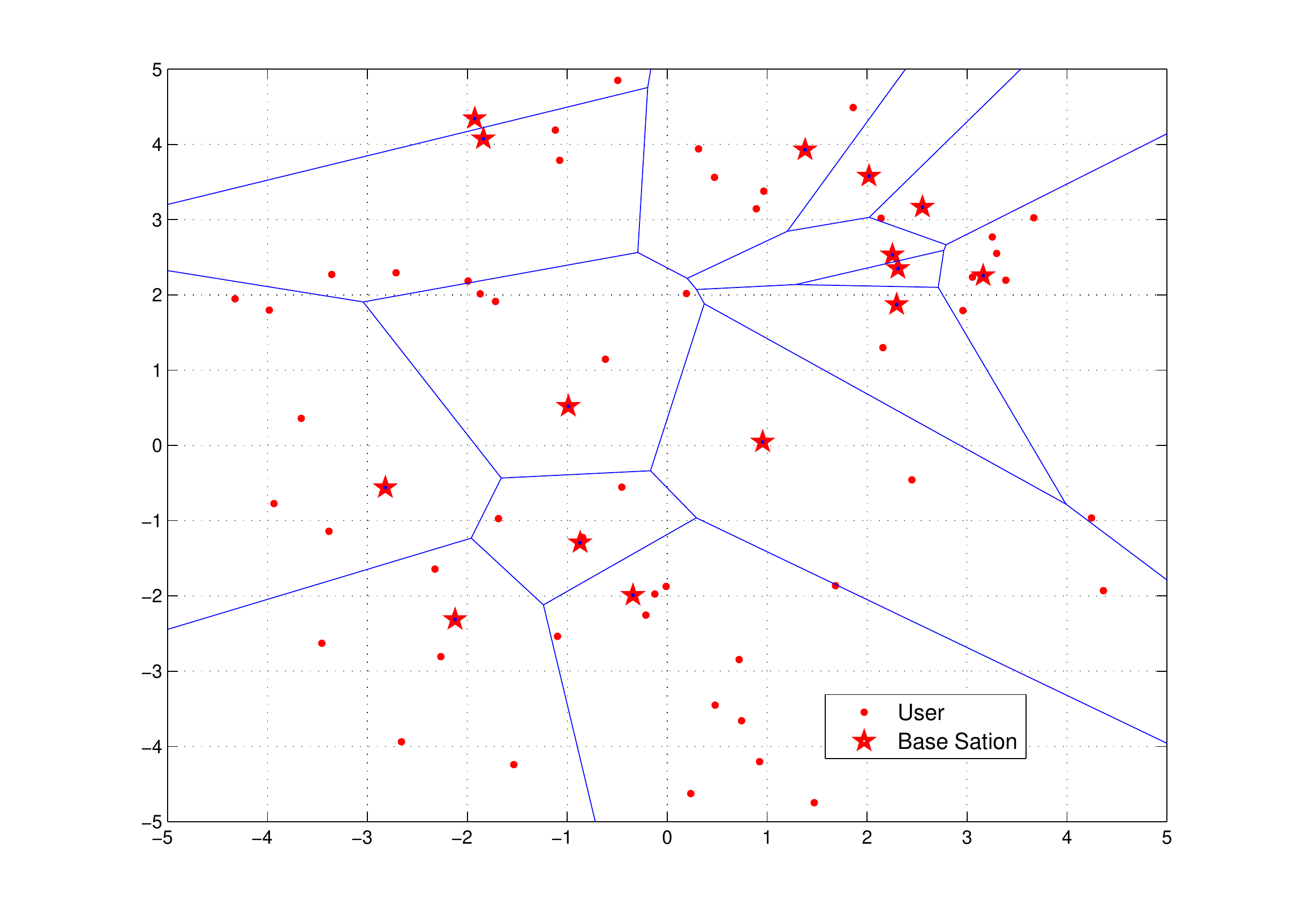}
\caption{An example of a network model in which the density of BSs and users are $\lambda=0.25$ and $\lambda=0.75$}
\label{fig:PPPnetwork}
\end{figure}

In the nearest model, an importance parameter $r$ is defined as the distance from a typical user to its associated BS. Since each user connects with the closest BS, all neighboring BSs must be further than $r$. The null probability of a 2-D Poisson process with density $\lambda$ in a globular area with radius $R$ is $\exp(-2\pi\lambda R^2)$,  then the  Cumulative Distribution Function (CDF)  of $r$ is given by  \cite{Andrews,ThomasDavidNovlan2011}:.
\begin{equation*}
F_R(R)=\mathbb{P}(r<R)=1-\mathbb{P}(r>R)=1-e^{-\pi \lambda R^2}
\end{equation*}
The  PDF  can be obtained by finding the derivative of the CDF: 
\begin{equation}
f_R(r)=\frac{dF_r(r)}{dr}=2\pi\lambda re^{-\lambda \pi r^2}
\label{PDFdistance}
\end{equation}

In Figure \ref{fig:PPPnetwork}, a 6 km x 6 km service area is considered where the distribution of BSs is a Poisson Spatial Process with density $\lambda=0.25$. It can observed that the boundaries of the cell as well as the locations of BSs in this model are generated randomly to correspond  with the changes of network operations. The main weakness of this model is that sometimes BSs  are located very close together, but this can be overcome by taking the average from multiple results of network performance. 

In this paper, it is assumed that every cell in the network has $M$ users and is allocated $N$ resource block (RB). The probability where the probability where a BS causes Intercell Interference (ICI) to a typical user is represented by a indicator function $\tau(RB_i=RB_j)$. This indicator function takes values 1 if the base station in cell $i$ and $j$ transmit on the same RB at the same time.  When the Round Robin scheduling is deployed, the expected values of $\tau(RB_i=RB_j)$ is archived by:
 \begin{equation*}
\mathbb{E}(\tau(RB_i=RB_j))=\frac{M}{N}=\epsilon
 \end{equation*}

\subsection{Downlink network model}
In downlink cellular network, the  transmitted signal from a BS usually experiences  multiple propagation phenomena including fast fading, slow fading and path loss \cite{pathloss}. Fast fading is caused by multipath propagation phenomena that results in rapid fluctuations of the received signal in terms of phase and amplitude. Slow fading, which occurs as the signal travels through  large obstructions such as  buildings or hills, leads to the slower phase and amplitude changes over the period of transmission. Path loss is a natural phenomenon in which the transmitted signal power gradually reduces when it travels over a distance. In this session, we will discuss about the statistical models of these propagation phenomena.

\subsection*{Statistical path loss model}
\label{section:SINR}
In most statistical models of wireless networks, it is assumed that all receiver antennas have the same gain and height.  The received signal power at a receiver at a distance $r$ from the transmitter can be given by Equation \ref{eq:pathloss} \cite{pathloss}:
\begin{align}
P_r=\zeta Pr^{-\alpha}
\label{eq:pathloss}
\end{align}
 The propagation path loss in dB unit is obtained by
\begin{equation}
PL (dB) = 10\log_{10}\left(\frac{P_r}{\zeta P} \right) =-10\alpha\log_{10}r
\label{eq:pathlossdB}
\end{equation}
 in which $\alpha$ is path loss exponent; P and $\zeta$ are standard transmission power of a BS and a power adjustment coefficient, respectively, $\zeta>0$. The values of $\alpha$, which were found from field measurements  are listed in Table \ref{table:pathloss}\cite{pathlossexponent}
 
\begin{table}[!h]
\centering
\small
\begin{tabular}{|c|c|}
\hline \rule[0ex]{0pt}{2.5ex} Environment & Path loss coefficient \\
\hline \rule[0ex]{0pt}{2.5ex} Free space  & 2 \\
\hline \rule[0ex]{0pt}{2.5ex} Urban Area  & 2.7 - 3.5 \\
\hline \rule[0ex]{0pt}{2.5ex} Suburban Area & 3 - 5 \\
\hline \rule[0ex]{0pt}{2.5ex} Indoor (line-of-sight) & 1.6 - 1.8\\
\hline 
\end{tabular}
\caption{Propagation path loss coefficient}
\label{table:pathloss}
\end{table}

Due to the variation of $\alpha$  with changes of transmission environment, as a signal propagates over a wide range of areas, it can be  affected  by different attenuation mechanisms. For example, the first propagation area near the BS is  free-space  area where $\alpha=2$ and the second area  closer to the user may be heavily-attenuated area such as urban area where $\alpha=3$. In a real network, the path loss can be estimated by  measuring signal strength and then be overcome by increasing the transmission power.

\subsection*{Fading channel model}
The multipath effect at the mobile receiver due to scattering from local scatters such as buildings in the neighborhood of the receiver causes  fast fading, while the variation in the terrain configuration between the base-station and the mobile receiver causes  slow shadowing (Figure \ref{fig:fading}) .

\begin{figure}[tbph]
\includegraphics[width=1\linewidth]{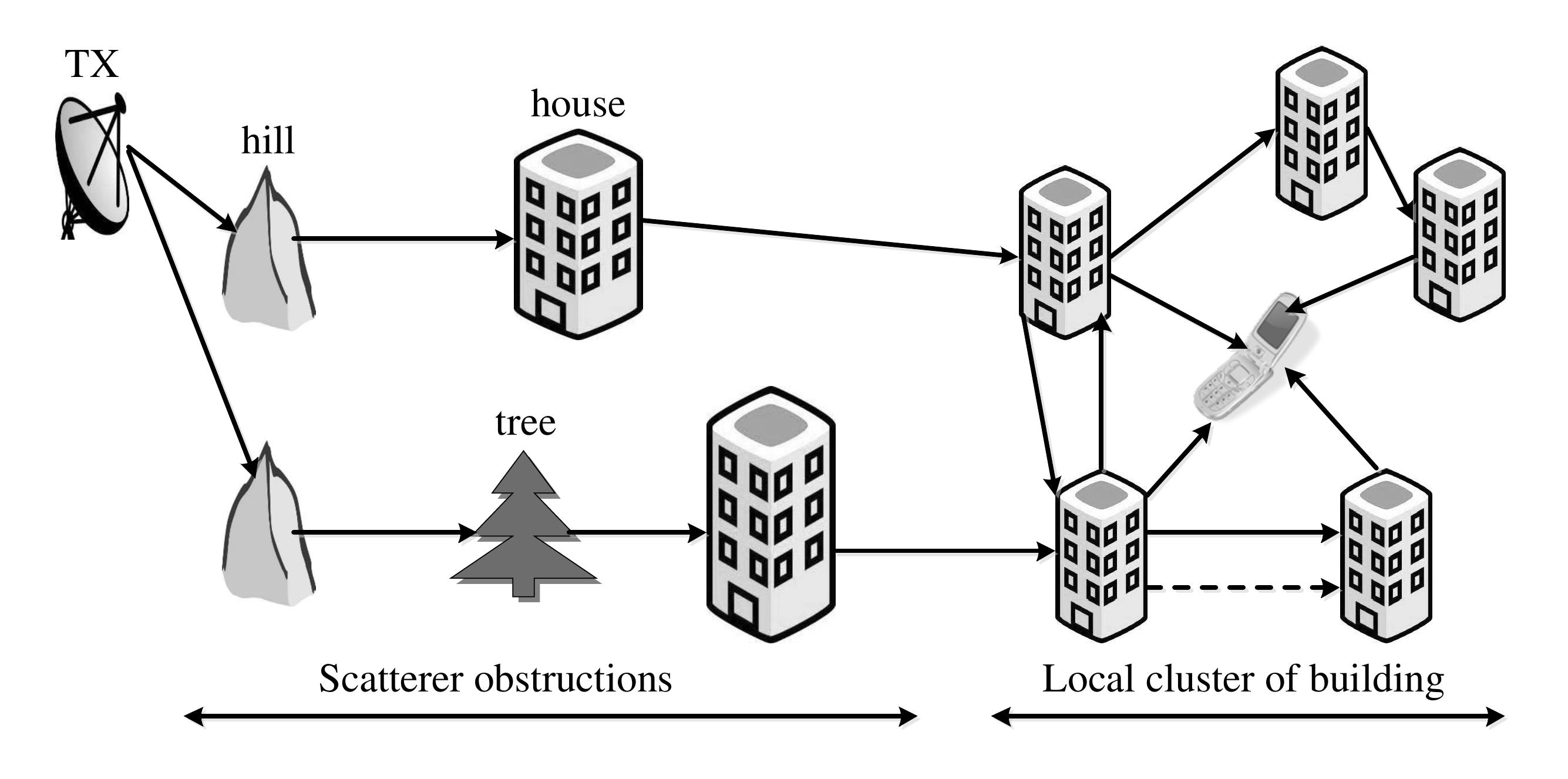}
\caption{ Typical mobile radio propagation topology}
\label{fig:fading}
\end{figure}

 The received signal envelope is  composed of a small scale multipath fading component superimposed on a larger scale or slower shadowing component.  The signal envelope of the multipath component can be modeled as a Rayleigh distributed RV, and its power can be modeled as an exponential RV. Thus, the path power gain has a mixed Rayleigh-Lognormal distribution which is also known as the Suzuki fading distribution model \cite{DinhThiThai}.

The PDF of power gain $g$ of a signal experiencing Rayleigh and Lognormal fading is found from the PDF of the product two cascade channels \cite{DinhThiThai}.
\begin{align}
f&_{R-Ln}(g)=\nonumber \\ &\int_{0}^{\infty}\frac{1}{x}\exp\left(-\frac{g}{x}\right)\frac{1}{x\sigma_z\sqrt{2\pi}}\exp\left(-\frac{\left(10\log_{10}x-\mu_z \right)^2 }{2\sigma_z^2} \right)dx   
\label{eq:originalCompo}
\end{align}
in which $\mu_z$ and $\sigma_z$ are mean and variance of Rayleigh-Lognormal random variable.\\
Using the substitution,  $t=\frac{10\log_{10}x-\mu_z}{\sqrt{2}\sigma_z}$, then
\begin{align*}
x&=10^{(\sqrt{2}\sigma_zt+\mu_z)/10} \triangleq \gamma(t) \\ \text{and} \quad
dx&=\sqrt{2}\sigma_z 10^{(\sqrt{2}\sigma_zt+\mu_z)/10}dt = x\sqrt{2}\sigma_z dt
\end{align*} 
The Equation \ref{eq:originalCompo} becomes
\begin{align}
f_{R-Ln}(g)&=\int_{-\infty}^{\infty}\frac{1}{\sqrt{\pi}}\frac{1}{\gamma(t)}\exp\left(-\frac{g}{\gamma(t)} \right) \exp(-t^2)dt
\label{tmpGauss-Hermite}
\end{align}
The integral in Equation \ref{tmpGauss-Hermite} has the suitable form for Gauss-Hermite expansion approximation \cite{Stegun1972}. Thus, the  PDF  can be approximated by:
\begin{equation}
f_{R-Ln}(g)=\sum_{n=1}^{N_p}\frac{\omega_n}{\sqrt{\pi}}\frac{1}{\gamma(a_n)}\exp\left(-\frac{g}{\gamma(a_n)} \right) 
\label{eq:Rayleigh-Lognormalpdf}
\end{equation}
in which
\begin{itemize}
	\item 	$w_n$  and $a_n$ are the weights and the abscissas of the Gauss-Hermite polynomial respectively. The approximation becomes more accurate with increasing approximation order $Np$. For sufficient approximation, $Np=12$ is used.
	\item  $\gamma(a_n)=10^{(\sqrt{2}\sigma_za_n+\mu_z)/10}$.
\end{itemize}

Hence, the  CDF  of Rayleigh-Lognormal  RV  $F_{R-Ln}(g)$  is obtained by the integral of  PDF  from 0 to $g$, and is derived in the  following steps:
\begin{align}
F_{R-Ln}(g)=&\int\limits_{0}^{g}f(x)dx \nonumber \\
=&\int\limits_{0}^{g}\sum_{n=1}^{N_p}\frac{\omega_n}{\sqrt{\pi}}\frac{1}{\gamma(a_n)}\exp\left(-\frac{x}{\gamma(a_n)} \right)dx \nonumber \\
=&\sum_{n=1}^{N_p}\frac{\omega_n}{\sqrt{\pi}}\frac{1}{\gamma(a_n)}\int\limits_{0}^{g}\exp\left(-\frac{x}{\gamma(a_n)} \right)dx \nonumber \\
=&\sum_{n=1}^{N_p}\frac{\omega_n}{\sqrt{\pi}}\left(1-\exp\left(-\frac{g}{\gamma(a_n)} \right)  \right) 
\end{align}

Since $g$ is defined as the channel power gain, $g$ is a positive real number $(g>0)$. The MGF of $g$ can be found as shown below:
\begin{align}
M_{R-Ln}(s)&=\int\limits_{0}^{\infty}f_{R-Ln}(x)e^{-xs}dx \nonumber \\
&=\int\limits_{0}^{\infty}\sum_{n=1}^{N_p}\frac{\omega_n}{\sqrt{\pi}}\frac{1}{\gamma(a_n)}\exp\left(-\frac{x}{\gamma(a_n)} \right) e^{-xs}dx \nonumber\\
&=\sum_{n=1}^{N_p}\frac{\omega_n}{\sqrt{\pi}}\frac{1}{\gamma(a_n)}\int\limits_{0}^{\infty}\exp\left[-x\left(\frac{1}{\gamma(a_n)}+s\right) \right] dx \nonumber \\
&=\sum_{n=1}^{N_p}\frac{\omega_n}{\sqrt{\pi}}\frac{1}{1+s\gamma(a_n)}
\label{eq:MGF}
\end{align}

\subsection*{Signal-to-Interference-Noise (SINR)}
\label{section:SINR}
The received  signal power for  a  user that is communicating with  it’s serving BS at a distance $r$  and a channel power gain $g$ is given by :
\begin{equation}
S(r)=\zeta Pgr^{-\alpha}
\label{eq:desiredsignalpower}
\end{equation}
The set of interfering BSs is denoted as $\theta$; $r_u$ and 
$g_u$ are the distance and channel power gain from a user to an interfering BS, respectively. The interfering BSs are assumed to transmit at the same power $P_u=\rho P (\rho>0)$. The intercell interference at a user is obtained by
\begin{equation}
I_\theta=\sum_{u\in\theta}\tau(RB_i=RB_j)P_ug_ur_u^{-\alpha}=\sum_{u\in\theta}\rho \tau(RB_i=RB_j)Pg_ur_u^{-\alpha}
 \label{eq:interfernce}
\end{equation}

Combining Equation \ref{eq:desiredsignalpower} and \ref{eq:interfernce}, the received instantaneous SINR(r) at a  user is found from Equation \ref{eq:receivedSINR}
\begin{equation}
SINR(r)=\frac{\zeta Pgr^{-\alpha}}{\sigma^2+I_\theta}
\label{eq:receivedSINR}
\end{equation}
where $\sigma^2$ denotes the Gaussian noise at the receiver.

\section{Coverage probability}
 The coverage probability $P_c$ of a typical user at a distance $r$ from its serving BS for a given $SINR(r)$ threshold $T_c$ is defined as the probability of event in which the received SINR in  Equation \ref{eq:receivedSINR} is larger than a threshold. In other words, if the received SINR(r) at a user is larger than SINR threshold $T_c$, the user can successfully decode the received signal and communicate with the serving BS.   The value of $T_c$ is  dependent on the receiver sensitivity of the UE. The coverage probability $P_c$ can be written as a function of SINR threshold  $T_c$, BS density $\lambda$ and attenuation coefficient $\alpha$ and the distance between the user and its serving BS:
 \begin{equation}
 \mathbb{P}_c(T_c,\lambda,\alpha,r)=\mathbb{P}(SINR(r)>T_c)
 \label{coveragadefinition}
 \end{equation}
or
\begin{equation}
\mathbb{P}_c(T_c,\lambda,\alpha,r)=\mathbb{P}\left(\frac{\zeta Pgr^{-\alpha}}{\sum_{u\in\theta}\tau(RB_i=RB_j)P_ug_ur_u^{-\alpha}+\sigma^2} > T_c \right)
\label{eq:coverageSINRform}
\end{equation}

For a given user, if $r$ is the distance from the user to its serving BS  then $SINR(r)$ depends on the power gain from BS $g$,  the power gain from interfering BS   $g_u$,  $\theta$ is the set of interfering BS, and $r_u$ is the distance  from a user to its interfering BS.  In Equation \ref{eq:coverageSINRform}, $\mathbb{P}$ stands for the conditional average coverage probability and it is expressed as a function of variables $g, g_u, r_u$ and $\theta$, then Equation \ref{eq:coverageSINRform} can be written as 
\begin{align}
\mathbb{P}_c&(T_c,\lambda,\alpha,r)=\nonumber \\
&\mathbb{P}_{\text{\textit{g}},\text{\textit{g}}_{\text{\textit{u}}},\text{\textit{r}}_{\text{\textit{u}}},\theta}\left(\frac{\zeta Pgr^{-\alpha}}{\sum_{u\in\theta}\tau(RB_i=RB_j)P_ug_ur_u^{-\alpha}+\sigma^2} > T_c \right)
\label{eq:coveragedefinitionFinal}
\end{align}

\begin{theorem}
The coverage probability of a typical user in Rayleigh-Lognormal fading in which BSs are distributed as PPP with density $\lambda$ and are allocated $N$ sub-bands randomly is given by 
\begin{equation}
	P_c (T_c,\lambda,\alpha,r)=\sum_{n=1}^{N_p}\frac{w_n}{\sqrt{\pi}}e^{-\frac{T_c}{\gamma(a_n)}\frac{1}{\zeta SNR}r^\alpha } e^{-\pi\lambda\epsilon r^2 f_{I} (T_c,n)}
	\label{eq:singlecoveragePro}
\end{equation}
where $SNR=\frac{P}{\sigma^2}$ is the signal-to-noise ratio at the transmitter, $C=T_c\frac{\rho}{\zeta}\frac{\gamma(a_{n1})}{\gamma(a_n)}$; $f_{I}(T_c,n)$ is defined in Equation \ref{eq:singlefinalexp}. 
\label{theo:singlecoverage}
\end{theorem}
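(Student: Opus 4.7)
The plan is to convert the coverage probability into a Laplace-transform problem using the Gauss--Hermite expansion of the Rayleigh--Lognormal density, then evaluate the resulting transform with the PPP probability generating functional. Starting from Equation \ref{eq:coveragedefinitionFinal}, I would isolate the serving-link gain $g$ on one side of the SINR inequality and condition on the interference to get
\begin{equation*}
\mathbb{P}_c = \mathbb{E}_{I_\theta}\!\left[\mathbb{P}\!\left(g > \tfrac{T_c r^\alpha}{\zeta P}(I_\theta + \sigma^2)\,\Big|\,I_\theta\right)\right].
\end{equation*}
Integrating the mixture density in Equation \ref{eq:Rayleigh-Lognormalpdf} from any threshold $G$ to infinity turns the tail probability into $\sum_{n=1}^{N_p}\tfrac{\omega_n}{\sqrt{\pi}}e^{-G/\gamma(a_n)}$, and this immediately produces the outer finite-sum structure of Equation \ref{eq:singlecoveragePro}.

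Next I would split each exponent into a noise part and an interference part. The noise part is deterministic and, using $SNR=P/\sigma^{2}$, equals $\exp\!\bigl(-T_c r^\alpha \sigma^{2}/(\zeta P\gamma(a_n))\bigr)=\exp\!\bigl(-\tfrac{T_c}{\gamma(a_n)}\tfrac{r^\alpha}{\zeta\,SNR}\bigr)$, which is exactly the first exponential in the theorem. The remaining factor is $\mathcal{L}_{I_\theta}(s_n)$ with $s_n = T_c r^\alpha/(\zeta P \gamma(a_n))$, and must be shown to equal $\exp\!\bigl(-\pi\lambda\epsilon r^2 f_{I}(T_c,n)\bigr)$.

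To compute $\mathcal{L}_{I_\theta}(s_n)$ I would invoke three structural facts about the interference field: (i) under nearest-BS association the interferers form a PPP restricted to the exterior of the disk $B(0,r)$; (ii) Round Robin sub-band selection is an independent Bernoulli thinning by $\epsilon = M/N$, so the effective interferer intensity is $\lambda\epsilon$; and (iii) the marks $g_u$ are i.i.d.\ Rayleigh--Lognormal, independent of the point process. Applying the PGFL of the thinned PPP in polar coordinates, and averaging out $g_u$ inside the product using the MGF in Equation \ref{eq:MGF}, yields
\begin{equation*}
\mathcal{L}_{I_\theta}(s_n) = \exp\!\left(-2\pi\lambda\epsilon \int_r^\infty \bigl(1 - M_{R-Ln}(\rho P s_n v^{-\alpha})\bigr)\, v\, dv\right).
\end{equation*}
The substitution $t=(v/r)^{2}$ strips an $r^{2}$ and converts $2\pi$ into $\pi$, exactly matching the prefactor $\pi\lambda\epsilon r^{2}$ in the theorem and leaving a dimensionless integral that I would take as the definition of $f_I(T_c,n)$. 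Inserting the Gauss--Hermite sum form of $M_{R-Ln}$ reveals the constant $C=T_c(\rho/\zeta)\gamma(a_{n1})/\gamma(a_n)$ appearing in the theorem statement as the natural argument of the resulting kernel.

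The main obstacle is this last step. Once $M_{R-Ln}$ is expanded, I am left with a double sum over Gauss--Hermite indices $n$ and $n_1$ of integrals of the form $\int_{1}^{\infty}\bigl(1-(1+Cu^{-\alpha/2})^{-1}\bigr)du$, which for general $\alpha$ do not collapse to elementary functions (they are incomplete-Beta / Gauss-hypergeometric in disguise). The paper's strategy, which I would follow, is to keep these as the defining integrals of $f_I(T_c,n)$ (Equation \ref{eq:singlefinalexp}) and later approximate them by Gauss--Legendre quadrature. The only subtle interchange along the way is commuting the interference expectation with the finite Gauss--Hermite sum in the tail step, which is legitimate because the sum has a fixed finite number of nonnegative terms.
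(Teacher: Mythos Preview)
Your proposal is correct and mirrors the paper's own proof: isolate $g$, apply the Gauss--Hermite form of the Rayleigh--Lognormal CCDF, split the noise and interference exponentials, and evaluate the interference Laplace transform via the PGFL of the (Bernoulli-thinned) PPP outside $B(0,r)$ followed by the substitution $t=(r_u/r)^2$. The only detail you leave implicit is how the paper actually reduces the remaining integral $\int_1^\infty C t^{-\alpha/2}/(1+Ct^{-\alpha/2})\,dt$: it writes this as $\int_0^\infty-\int_0^1$, evaluates the infinite piece exactly via the reflection formula to produce the $\tfrac{2}{\alpha}C^{2/\alpha}\pi/\sin\!\bigl(\pi(\alpha-2)/\alpha\bigr)$ term, and applies Gauss--Legendre only to the finite remainder $\int_0^1$, which is precisely what Equation~\ref{eq:singlefinalexp} records.
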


\begin{proof}:
See the Appendix.
\end{proof}

It is observed that there are two exponential parts in Equation \ref{eq:singlecoveragePro}. The first part, i.e $e^{-\frac{T_c}{\gamma(a_n)}\frac{1}{\zeta SNR}r^\alpha}$, which represents the transmission power of the serving BS $\zeta SNR$ and the coverage threshold $T_c$, indicates that the coverage probability is proportional to $\zeta SNR$ . The second part, i.e $e^{-\pi\lambda\epsilon r^2 f_{I} (T_c,n)}$, which represents the ICI, indicates that the coverage probability is inversely proportional to the exponential function of the ratio between the number of users and RBs.

\begin{lemma}
The average coverage probability of a typical user over a cellular network with composite Rayleigh-Lognormal fading is 
\begin{align}
\overline{P}_c(T,\lambda,\alpha)&=4\pi\lambda\sum_{m=1}^{N_{GL}}\frac{c_m(x_m+1)}{(1-x_m)^3}e^{-\pi\lambda\left(\frac{x_m+1}{1-x_m} \right)^2}\nonumber\\
&\sum_{n=1}^{N_p}\frac{w_n}{\sqrt{\pi}}e^{-\frac{T_c}{\gamma(a_n)}\frac{1}{\zeta SNR}\left(\frac{x_m+1}{1-x_m}\right) ^\alpha } e^{-\pi\lambda\epsilon \left(\frac{x_m+1}{1-x_m}\right) ^2 f_{I} (T_c,n)}
\end{align}
\label{lemmaAverageCov}
\end{lemma}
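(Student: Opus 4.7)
The plan is to obtain $\overline{P}_c$ by averaging the conditional coverage probability from Theorem \ref{theo:singlecoverage} over the distribution of the distance $r$ from the typical user to its serving BS, and then to numerically evaluate the resulting improper integral by a single standard substitution followed by Gauss--Legendre quadrature.

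First I would write
\begin{equation*}
\overline{P}_c(T_c,\lambda,\alpha)=\mathbb{E}_R\bigl[P_c(T_c,\lambda,\alpha,r)\bigr]=\int_{0}^{\infty}P_c(T_c,\lambda,\alpha,r)\,f_R(r)\,dr,
\end{equation*}
insert the PDF $f_R(r)=2\pi\lambda r\,e^{-\pi\lambda r^{2}}$ from Equation \ref{PDFdistance}, and substitute the closed form for $P_c(T_c,\lambda,\alpha,r)$ given in Equation \ref{eq:singlecoveragePro}. After interchanging the finite Gauss--Hermite sum with the integral (which is legitimate since it is a finite sum of nonnegative terms), this reduces the problem to evaluating
\begin{equation*}
\int_{0}^{\infty} 2\pi\lambda r\,\exp\!\left(-\pi\lambda r^{2}-\pi\lambda\epsilon r^{2}f_{I}(T_c,n)-\tfrac{T_c}{\gamma(a_n)\zeta\,SNR}r^{\alpha}\right)dr
\end{equation*}
for each $n$. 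Because the term $r^{\alpha}$ with a general path-loss exponent $\alpha$ does not merge with the Gaussian weight $e^{-\pi\lambda r^{2}}$, this integral has no elementary antiderivative, so a quadrature is required; this is the only real obstacle and the reason the lemma is stated as an approximation.

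To apply Gauss--Legendre, which is defined on $[-1,1]$, I would use the standard reciprocal-type change of variable $r=\tfrac{x+1}{1-x}$, which maps $x\in[-1,1)$ bijectively onto $r\in[0,\infty)$ with Jacobian $dr=\tfrac{2}{(1-x)^{2}}\,dx$. Substituting this into the integrand yields, for each Gauss--Hermite node $n$, an integrand on $[-1,1]$ of the form
\begin{equation*}
2\pi\lambda\cdot\frac{x+1}{1-x}\cdot\frac{2}{(1-x)^{2}}\cdot e^{-\pi\lambda\left(\frac{x+1}{1-x}\right)^{2}}\cdot e^{-\frac{T_c}{\gamma(a_n)\zeta\,SNR}\left(\frac{x+1}{1-x}\right)^{\alpha}}\cdot e^{-\pi\lambda\epsilon\left(\frac{x+1}{1-x}\right)^{2}f_{I}(T_c,n)} .
\end{equation*}
Applying the $N_{GL}$-point Gauss--Legendre rule $\int_{-1}^{1}h(x)\,dx\approx\sum_{m=1}^{N_{GL}}c_{m}\,h(x_{m})$ with nodes $x_m$ and weights $c_m$ then produces a sum whose $m$-th term carries the prefactor $4\pi\lambda\,c_{m}(x_{m}+1)/(1-x_{m})^{3}$ obtained by collecting the factor $2\pi\lambda$, the Jacobian $2/(1-x_m)^2$, and the factor $(x_m+1)/(1-x_m)$ from $r$ itself.

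Finally, gathering the outer Gauss--Legendre sum over $m$ with the inner Gauss--Hermite sum over $n$ (carried over from Theorem \ref{theo:singlecoverage}) gives exactly the double-sum expression stated in Lemma \ref{lemmaAverageCov}. The proof therefore reduces to three routine ingredients: the law of total probability with respect to $r$, the fixed substitution $r=(x+1)/(1-x)$, and a single application of Gauss--Legendre quadrature; no nontrivial inequality or probabilistic identity beyond Theorem \ref{theo:singlecoverage} is needed.
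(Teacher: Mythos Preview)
Your proposal is correct and follows essentially the same route as the paper: average Theorem~\ref{theo:singlecoverage} against $f_R(r)=2\pi\lambda r e^{-\pi\lambda r^2}$, map $[0,\infty)$ to $[-1,1]$, and apply Gauss--Legendre. The only cosmetic difference is that the paper performs the change of variable in two stages, first $r=t/(1-t)$ and then $t=(z+1)/2$, whereas you compose these into the single substitution $r=(x+1)/(1-x)$ with Jacobian $2/(1-x)^2$; the resulting prefactor $4\pi\lambda(x_m+1)/(1-x_m)^3$ and the final double sum are identical.
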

in which $c_m$ and $x_m$ are weights and nodes of Gauss-Legendre rule with order $N_{GL}$; $P_c$ as defined in Equation \ref{theo:singlecoverage}.

\begin{proof}
The average coverage probability is achieved by taking the expected value of $P_c(T_c,\lambda,\alpha,r)$ in Equation \ref{eq:singlecoveragePro} with variable $r>0$ 
\begin{align}
\overline{P}_c(T_c,\lambda,\alpha)&=\mathbb{E}\left(P_c(T_c,\lambda,\alpha,r)\right)\nonumber\\ 
&=\int_{0}^{\infty}P_c(T_c,\lambda,\alpha,r)f_R(r)dr\nonumber \\
&=\int_{0}^{\infty}2\pi\lambda re^{-\pi\lambda r^2}P_c(T_c,\lambda,\alpha,r)dr
\label{eq:singlecoverageProintegral}
\end{align}

\begin{equation*}
\hspace{-2.5cm}
\text{Let} \quad  r=\frac{t}{1-t}  \Rightarrow 
  \begin{cases}
    0 < t < 1; \\
    t=\frac{r}{r+1};\\
    dx=\frac{1}{(1-t)^2};
  \end{cases}
  \text{then,}
\end{equation*}

\begin{align}
\overline{P}_c(T_c,\lambda,\alpha)&=2\pi\lambda \int\limits_{0}^{1}\frac{t}{(1-t)^3}e^{-\pi\lambda(\frac{t}{1-t})^2}P_c(T_c,\lambda,\alpha,\frac{t}{1-t})dt \nonumber
\end{align}
\begin{equation*}
\hspace{-2.5cm}
\text{Let} \quad t=\frac{1}{2}z+\frac{1}{2}  \Rightarrow 
  \begin{cases}
    -1<z<1; \\
    z=2t-1;\\
    dt=\frac{1}{2}dt;
  \end{cases}
  \text{then,}
\end{equation*}
\begin{align}
\overline{P}_c(T_c,\lambda,&\alpha)=\nonumber \\
&4\pi\lambda \int\limits_{-1}^{1}\frac{z+1}{(1-z)^3}e^{-\pi\lambda(\frac{z+1}{1-z})^2}P_c(T_c,\lambda,\alpha,\frac{z+1}{1-z})dz
\label{eq:tmpGaussLegenre}
\end{align}
The integral in Equation \ref{eq:tmpGaussLegenre} has the suitable form of Gauss-Legendre approximation. Hence, the average coverage probability is approximated by
\begin{align}
\overline{P}_c(T,\lambda,\alpha)&=4\pi\lambda\sum_{m=1}^{N_{GL}}\frac{c_m(x_m+1)}{(1-x_m)^3}e^{-\pi\lambda\left(\frac{x_m+1}{1-x_m} \right)^2}\nonumber\\
&\sum_{n=1}^{N_p}\frac{w_n}{\sqrt{\pi}}e^{-\frac{T_c}{\gamma(a_n)}\frac{1}{\zeta SNR}\left(\frac{x_m+1}{1-x_m}\right) ^\alpha } e^{-\pi\lambda\epsilon \left(\frac{x_m+1}{1-x_m}\right) ^2 f_{I} (T_c,n)}
\label{eq:averagecoverP}
\end{align}
The Lemma \ref{lemmaAverageCov} is proved.
\end{proof}

The close-form expression of the average coverage probability has been not yet been derived. Hence, the use of Gauss-Legendre rules is considered as the appropriate approach to find the close-form expression.

For $\sigma^2=0$ or high $SNR$, the average coverage probability can be achieved as follows: 
\begin{align}
\overline{P}_c(T,\alpha) &= \int_{0}^{\infty}2\pi\lambda re^{-\pi\lambda r^2}P_c(T,\lambda,\alpha,r)dr \nonumber \\
&=\int_{0}^{\infty}2\pi\lambda re^{-\pi\lambda r^2}\sum_{n=1}^{N_p}\frac{w_n}{\sqrt{\pi}}e^{-\pi\lambda\epsilon r^2 f_{I} (T,n)}dr \nonumber \\
&=\sum_{n=1}^{N_p}\frac{w_n}{\sqrt{\pi}}\int_{0}^{\infty}2\pi\lambda re^{-\pi\lambda r^2\left(1+\epsilon f_{I}(T,n)\right)}dr \nonumber \\
&=\sum_{n=1}^{N_p}\frac{\omega_n}{\sqrt{\pi}}\frac{1}{1+\epsilon f_{I}(T,n)}
\label{eq:singleCoverHighSNR}
\end{align}
This is the close-form expression of the average coverage probability of a typical user in the interference-limited PPP network. It is observed from equation that the average coverage probability does not depend on the density of BS which means the power of the desired signal in this case counter-balanced with the power of ICI. This results is comparable with others that were published in \cite{Andrews,Dhillon2012} for the case of Rayleigh fading and a single user. 
\begin{lemma}
The coverage probability of a typical user over network in Rayleigh fading only.
\begin{equation}
P_c(T,\lambda,\alpha,r)=e^{-T\frac{1}{\zeta SNR}r^\alpha}e^{-\pi\lambda\epsilon r^2 f_{I}(T,N,n)}
\label{eq:singleCovRay}
\end{equation}
where 
\begin{dmath}
f_{I}(T)=\frac{2}{\alpha}C^{\frac{2}{\alpha}}\frac{\pi}{\sin\left(\frac{\pi(\alpha-2)}{\alpha} \right) } +  \sum_{m=1}^{N_{GL}}\frac{c_m}{2}\frac{C}{C+\left(\frac{x_m+1}{2} \right)^{\alpha/2}}
\end{dmath}
where $SNR=\frac{P}{\sigma^2}$ is the signal-to-noise ratio at the transmitter, $C=T_c\frac{\rho}{\zeta}\frac{\gamma(a_{n1})}{\gamma(a_n)}$; $f_{I}(T_c,n)$ is defined in Equation \ref{eq:singlefinalexp}. 
\end{lemma}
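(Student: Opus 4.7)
The plan is to obtain this lemma as a specialization of Theorem \ref{theo:singlecoverage} to pure Rayleigh fading, and then to derive the explicit form of $f_I(T)$ from the interference Laplace transform. In the Rayleigh--Lognormal PDF of Equation \ref{eq:Rayleigh-Lognormalpdf}, sending $\sigma_z\to 0$ collapses the Gauss--Hermite sum to a single term in which $\gamma(a_n)$ becomes a constant that can be absorbed into the transmit power; equivalently, every link's power gain is $\mathrm{Exp}(1)$. Inserting this into Equation \ref{eq:singlecoveragePro} reproduces the product form $e^{-Tr^{\alpha}/(\zeta\,SNR)}\,e^{-\pi\lambda\epsilon r^{2}f_I(T)}$ stated in the lemma, so the substantive content is the closed-form expression for $f_I(T)$.

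To derive it without appealing to the unseen appendix, I would start from the SINR definition Equation \ref{eq:receivedSINR}. Since the desired-signal gain $g$ is exponential with unit mean,
\begin{equation*}
\mathbb{P}\!\left(SINR(r)>T \mid I_\theta\right)=\exp\!\left(-\frac{Tr^{\alpha}}{\zeta P}(\sigma^{2}+I_\theta)\right),
\end{equation*}
so taking the expectation over $I_\theta$ factorizes $P_c$ into the noise term $e^{-Tr^{\alpha}/(\zeta\,SNR)}$ and the Laplace transform $\mathcal{L}_{I_\theta}(s)$ evaluated at $s=Tr^{\alpha}/(\zeta P)$. The effective interferers form a PPP of density $\lambda\epsilon$ (after thinning by the scheduling probability $\epsilon=M/N$) restricted to the exterior of the disk of radius $r$, since the typical user is attached to its nearest BS.

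Applying the PGFL of a planar PPP together with the exponential-gain Laplace transform $1/(1+s\rho P u^{-\alpha})$, the log-Laplace becomes
\begin{equation*}
\log\mathcal{L}_{I_\theta}(s)=-2\pi\lambda\epsilon\int_{r}^{\infty}\frac{s\rho P\,u}{u^{\alpha}+s\rho P}\,du.
\end{equation*}
The normalization $v=u/r$ then pushes the lower limit to $1$ and exposes the constant $C=s\rho P/r^{\alpha}=T\rho/\zeta$, giving a prefactor $-\pi\lambda\epsilon r^{2}$ times $2\int_{1}^{\infty}\frac{Cv}{v^{\alpha}+C}\,dv$, which by comparison to Equation \ref{eq:singleCovRay} identifies $f_I(T)=2\int_{1}^{\infty}\frac{Cv}{v^{\alpha}+C}\,dv$.

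To put $f_I(T)$ in the stated mixed closed-form/quadrature shape, I would split the integral as $\int_{1}^{\infty}=\int_{0}^{\infty}-\int_{0}^{1}$. The improper piece is evaluated by $w=v^{\alpha}/C$, reducing to the classical beta integral $\int_{0}^{\infty}w^{2/\alpha-1}/(1+w)\,dw=\pi/\sin(2\pi/\alpha)$; using $\sin(2\pi/\alpha)=\sin(\pi(\alpha-2)/\alpha)$ yields the first term $\tfrac{2}{\alpha}C^{2/\alpha}\pi/\sin(\pi(\alpha-2)/\alpha)$. The residual $\int_{0}^{1}\frac{Cv}{v^{\alpha}+C}\,dv$ is transformed by $w=v^{2}$ into $\tfrac{1}{2}\int_{0}^{1}\frac{C}{C+w^{\alpha/2}}\,dw$ and then mapped to $[-1,1]$ via $w=(x+1)/2$, so Gauss--Legendre quadrature reproduces $\sum_{m}\tfrac{c_m}{2}\frac{C}{C+((x_m+1)/2)^{\alpha/2}}$. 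The main obstacle is precisely this finite piece: it has no elementary antiderivative, forcing the hybrid representation, and one must also carefully track the Jacobian constants from the two successive substitutions as well as the sign attached to the residual when reconciling with the formula displayed in the lemma.
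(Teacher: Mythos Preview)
Your proposal is correct and matches the paper's approach: the paper's proof of this lemma is literally the one-line observation that Rayleigh fading is the $\sigma_z=0$ degeneration of the Rayleigh--Lognormal model together with the Gauss--Hermite normalization $\sum_{n}\omega_n/\sqrt{\pi}=1$, which collapses the outer sum in Equation~\ref{eq:singlecoveragePro} and yields Equation~\ref{eq:singleCovRay} directly. Your subsequent re-derivation of $f_I$ via the PGFL, the $v=u/r$ normalization, the split $\int_1^\infty=\int_0^\infty-\int_0^1$, the beta-integral evaluation of the improper piece, and the Gauss--Legendre treatment of the finite piece is exactly what the Appendix does for the general case when proving Theorem~\ref{theo:singlecoverage}, so for the lemma itself it is more than the paper actually invokes; your closing caveat about the sign of the residual term is well taken, since the Appendix writes $I_1-I_2$ while the displayed $f_I$ carries a plus sign.
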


\begin{proof}
Rayleigh fading is a special case of composite Rayleigh-Lognormal fading with $\sigma_z=0 $ and given that $\sum_{n=1}^{N_p}\frac{\omega_n}{\sqrt{\pi}}=1$, then the coverage probability in this case is derived by Equation \ref{eq:singleCovRay}.
\end{proof}

The average coverage probability over network is calculated by integrating Equation \ref{eq:singlecoveragePro} with variable $r>0$, and then its closed-form is expressed as in Equation \ref{eq:singlecoveragePro} where $P_c(T,\lambda,\alpha,r)$ was defined in Equation \ref{eq:singleCovRay}. This analytical result is comparable to the corresponding result for Rayleigh fading given in \cite{Andrews}.

\section{Average capacity}
The average rate, i.e. ergodic rate, of a typical randomly user located in the network is defined as
\begin{equation}
R= \mathbb{E}_t\left[\ln(SINR(r)+1) \right] 
\end{equation}
where $SINR(r)$ is the received SINR at the user given in Equation \ref{eq:receivedSINR}; $\mathbb{E}_t$ represents the conditional expected values of $\ln(SINR(r)+1)$ over the PPP network with variable $t=T_c$. Since $\mathbb{E}(X)=\int\limits_{t>0}\mathbb{P}(X>t),\quad \forall X>0$,
\begin{align}
R&=\int\limits_{0}^{\infty}\mathbb{P}\left[\ln(SINR(r)+1)>t\right]dt \nonumber \\
&=\int\limits_{0}^{\infty}\mathbb{P}\left[SINR(r)>e^t-1\right]dt \nonumber \\
&=\int\limits_{0}^{\infty}\overline{P}_c(e^t-1,\lambda,\alpha)dt
\end{align}
in which $\overline{P}_c(e^t-1,\lambda,\alpha)$ is the average coverage probability of the typical user in the PPP network and obtained by Equation \ref{eq:averagecoverP}.

Using the similar approach in Theorem \ref{lemmaAverageCov}, the average rate can be  approximated by
\begin{equation}
\label{eq:singleCapacity}
R=\sum_{1i=1}^{N_{GL}}\frac{2c_{1i}}{(1-x_{1i})^2}\overline{P}_c(z(x_{1i}),\lambda,\alpha)
\end{equation} 
where $c_{1i}$ and $x_{1i}$ are weights and nodes of Gauss-Legendre rule with order $N_{GL}$; $z(x_{1i})=\exp\left(\frac{x_{1i}+1}{1-x_{1i}} \right)-1 $ and $P_c$ is defined in Equation \ref{eq:averagecoverP}.

\section{Simulation and discussion}
\subsection{Simulation setup}
The simulation algorithms is described in the following steps:

\noindent\rule{9cm}{0.9pt}
\textit{\textbf{for i=1:1:NoR}} \newline
	\forceindent count = 0;\\
	\forceindent \textit{\textbf{for i=1:1:NoS}}\\
		 \supperforceindent\textit{1. Generate $N$ numbers of BSs}\\
		 \supperforceindent\textit{2. Generate $N$ distances between a user and BSs.} \\
		 \supperforceindent\textit{3. Generate $N$ Rayleigh-Lognormal power gain values.}\\
		 \supperforceindent\textit{4. Calculate SINR.}\\
		 \supperforceindent\textit{5. Count outage event}\\
		 \supperforceindent \forceindent	if $SINR< threshold$\\
		 \supperforceindent \supperforceindent	\textit{count=count+1;}\\
		 \supperforceindent \forceindent	end\\
	\forceindent\textbf{end}\\
	\forceindent Coverage Probability \textit{P=count/NoS;}\\
\textbf{end}\\
Variance is obtained by Equation \ref{eq:variance}\\
\noindent\rule{9cm}{0.9pt}
in which $NoR$ and $NoS$ are number of simulation runs and samples per each run, respectively. Higher values of $NoR$ and $NoS$ give more accurate and stable results, however, it takes time and requires high performance computers. In this work, $NoR=5$ and $NoS=10^5$ are appropriate choices to obtain the acceptable variance of simulation results (smaller than $0.001$).

\subsection{Simulation results}
The relationship between coverage probability and related parameters are validated and visualized by Monte Carlo simulations as shown in the following figures. The simulation parameters in figures (\textit{if be not mentioned in figures}) are summarised in Table \ref{SimulationPara}.
\begin{table}[!h]
\centering
\begin{tabular}{|l|l|}
\hline \rule[0ex]{0pt}{3ex} \textbf{Parameter} & \textbf{Value} \\ 
\hline \rule[0ex]{0pt}{2ex} Density of BSs & $\lambda=0.25$ \\ 
\hline \rule[0ex]{0pt}{2ex} Number of RBs & 15 \\ 
\hline \rule[0ex]{0pt}{2ex} Standard transmission power & $SNR=10$ (dB) \\ 
\hline \rule[0ex]{0pt}{2ex} Power adjustment coefficient  & $\zeta=1$ \\ 
							of serving BS  &  \\
\hline \rule[0ex]{0pt}{2ex} Coverage threshold & $T_c=0$ (dB) \\
\hline \rule[0ex]{0pt}{2ex} Fading channel & $\mu_z=-7.3683$ dB \\ 
		\rule[0ex]{0pt}{2ex}& $\sigma_z=8$ dB \\
\hline \rule[0ex]{0pt}{2ex} Pathloss exponent & $\alpha=3.5$ \\ 
\hline 
\end{tabular} 
\caption{Analytical and simulation parameters}
\label{SimulationPara}
\end{table} 

With higher values of $\alpha$, total power of interfering signals decreases at  a faster  rate with distance compared to  desired signal since the user receives only one useful signal from serving cell and often suffers more than one interfering signals. The average coverage probability is, hence, inversely proportional to path loss exponent $\alpha$. 
\begin{figure}[!h]
\centering
\includegraphics[width=9cm,height=6.5cm]{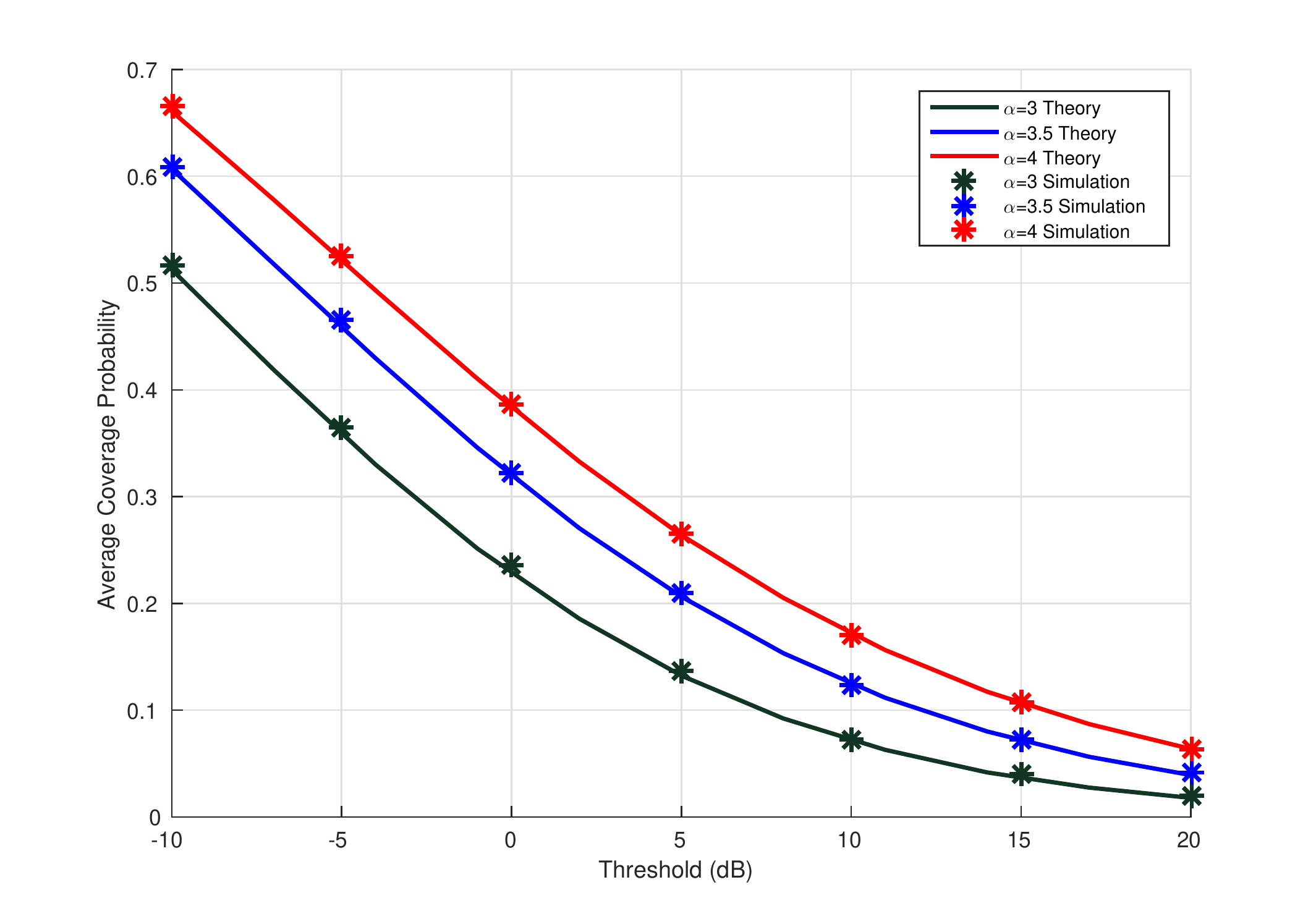}
\caption{Variation of coverage probability with threshold $T (dB)$ and different values of pathloss exponent $\alpha$}
\label{fig:singleSNR}
\end{figure}
Figure \ref{fig:singleSNR} indicates that when coverage threshold $T_c=0$ dB and $SNR=10 dB$, pathloss exponent $\alpha$ increases from 3.0 to 3.5 and ends at 4.0, the average coverage probability will increase by $36.66\%$ and $63.8\%$. The variance of average coverage probability with different values of $\alpha$ is shown in Table \ref{table:Tc0coverage}.
\begin{table}[!h]
\centering
\small
\begin{tabular}{|l|c|c|c|}
\hline \rule[0ex]{0pt}{2.5ex} Path loss exponent $\alpha$ & 3.0  & 3.5 & 4\\
\hline \rule[0ex]{0pt}{2.5ex} Average coverage probability  & 0.2362  & 0.3228 & 0.387 \\
\hline 
\end{tabular}
\caption{Average coverage probability when $T_c=0, SNR=10$}
\label{table:Tc0coverage}
\end{table}

When the coverage threshold increases that means the UE need a higher received SINR to detect and decode the received signals, the probability of successful communication between the user and its associated BS reduces which is reflected  in the decrease of coverage probability as shown in Figure  \ref{fig:singleSNR}. It is observed that when the coverage threshold increases  from 0 dB to 5 dB, the average coverage probability reduces by around $42.4\%$ from 0.2362 to 0.136.

\begin{figure}[!h]
\centering
\includegraphics[width=9cm,height=6.5cm]{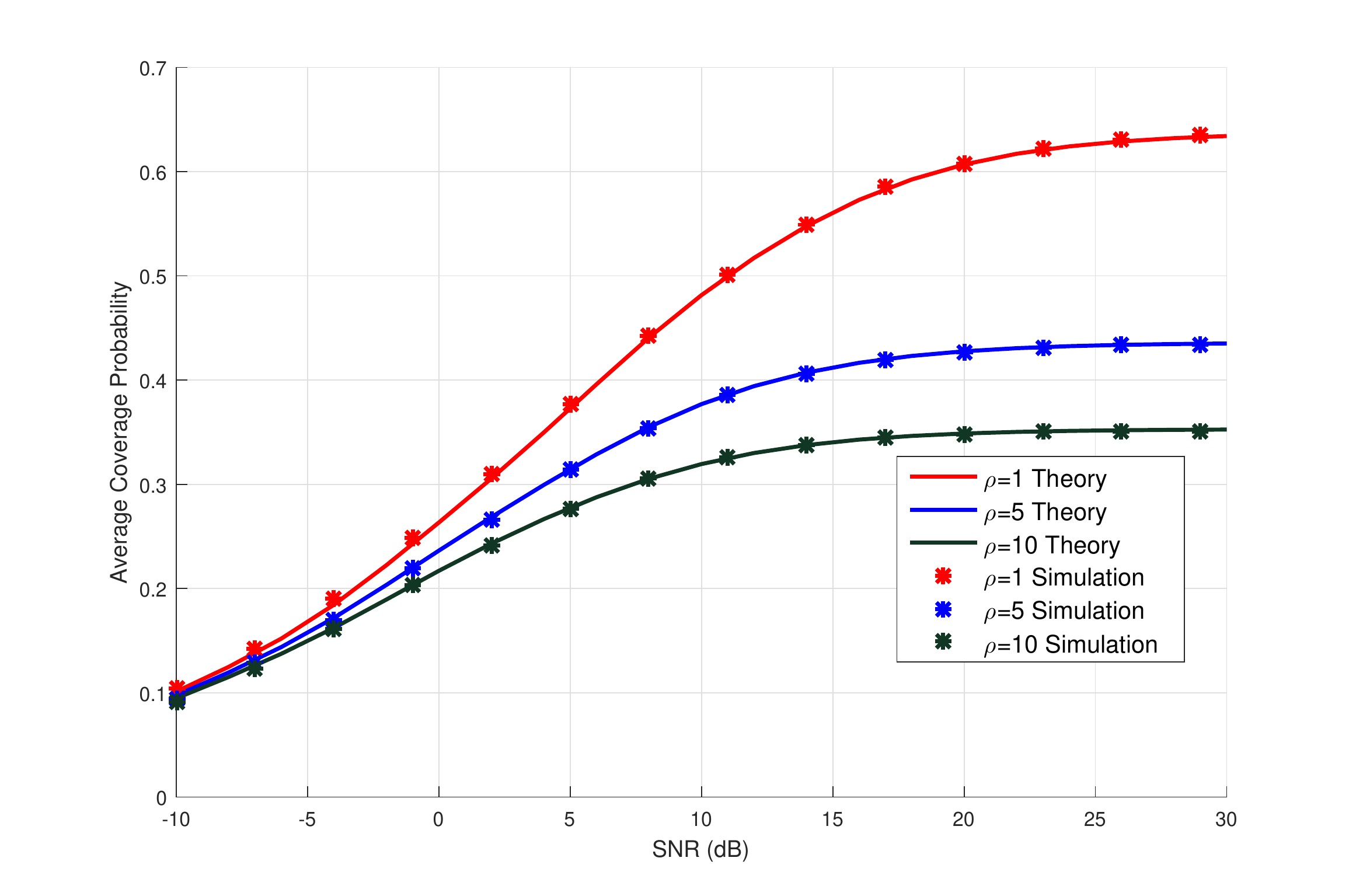}
\caption{Variation of average coverage probability with $SNR (dB)$}
\label{fig:multiSNR}
\end{figure}

When the transmission power P is much greater than the power of Gaussian noise, i.e. $P>>\sigma^2$, the Equation \ref{eq:receivedSINR} can be approximated by 
\begin{equation}
SINR(r)=\frac{\zeta}{\rho}\frac{gr^{-\alpha}}{\sum_{u\in\theta} \tau(RB_i=RB_j)g_ur_u^{-\alpha}}
\label{eq:aproxreceivedSINR}
\end{equation}
Hence in this case, the average coverage probability is consistent with the changes of standard transmission power $P$.
Figure \ref{fig:multiSNR} indicates that the average coverage probability is proportional to the standard transmission power when $SNR<20$ dB and reaches the upper bound when $SNR>20$ dB. Furthermore, it is observed that the upper bound is inversely proportional to the transmission power ratio. For example, when the transmission power ratio increase by 5 times from 1 to 5, the upper bound reduces by 30\% from 0.6 to around 0.42.

\begin{figure}[!h]
\centering
	\includegraphics[width=9cm,height=6.5cm]{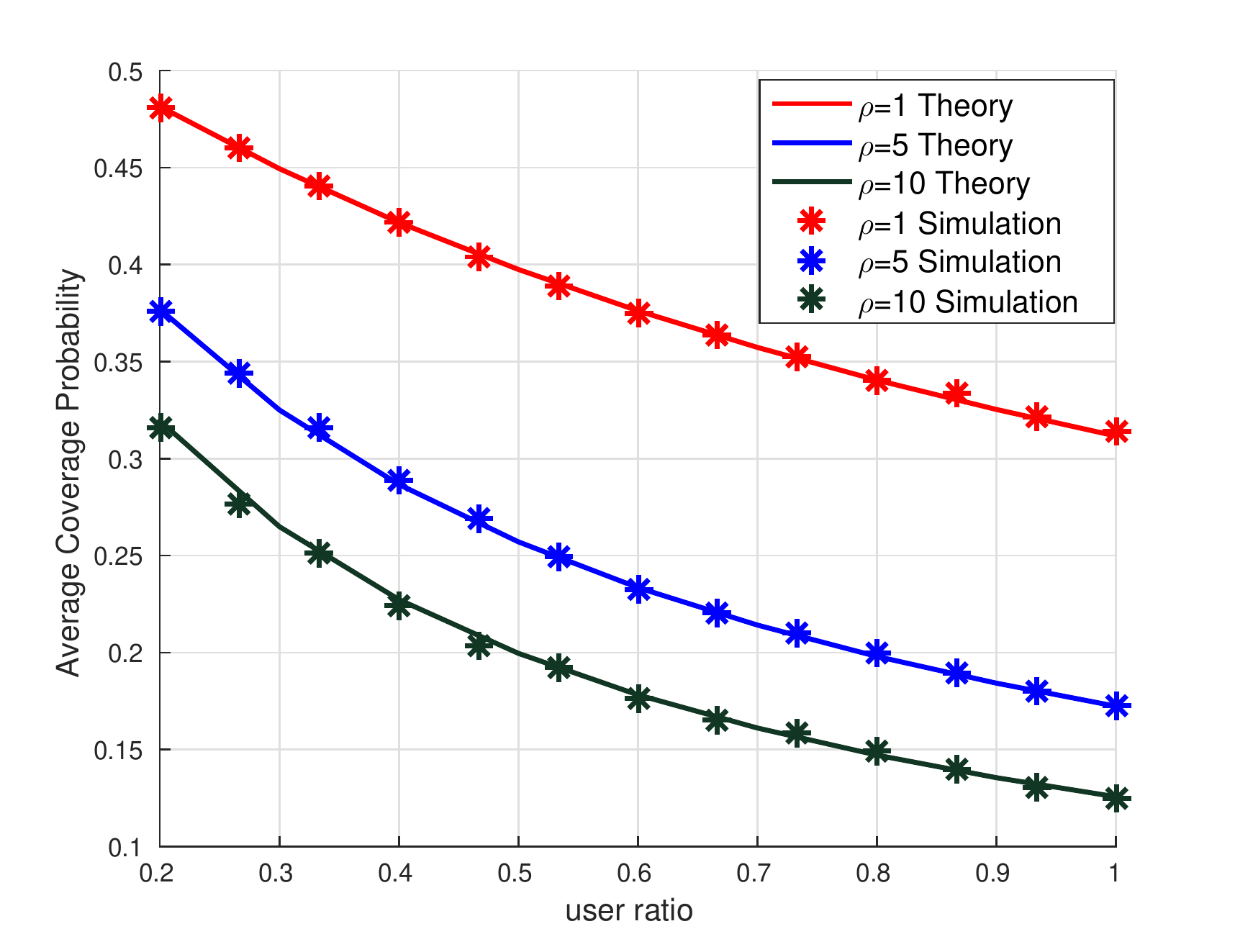}
	\caption{Average coverage probability with different values of user ratio $\epsilon$}
	\label{fig:differentuserratio}
\end{figure}

The impact of the ratio between the number of users and RBs (i.e. user ratio) is presented in Figure \ref{fig:differentuserratio}. When the user ratio increases, it means that more users  have  connections with the BS and more RBs should be used. Hence, the probability which two BSs transmit on the same RB at the same time increase which result in an increase  of the ICI. Consequently, the average coverage probability reduces.

\begin{figure}[!h]
\centering
	\includegraphics[width=9cm,height=6.5cm]{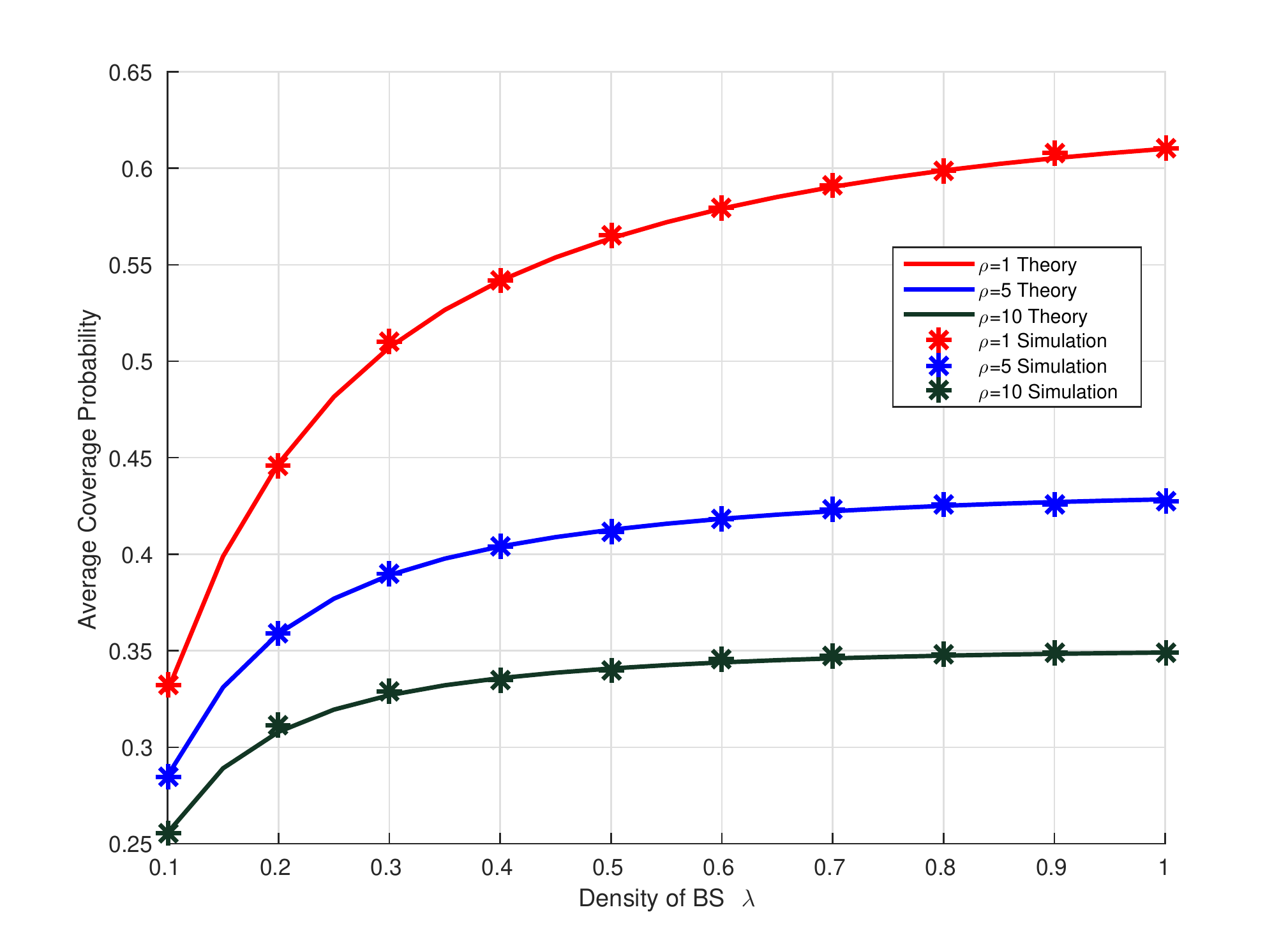}
	\caption{Average coverage probability with different values of density of base station $\lambda$}
	\label{fig:differentlambda}
\end{figure}
It is clear that an increase in the density of BSs $\lambda$ means that the user has more opportunities to connect with the BS and the distance from the users and its serving BS may be reduced. However, when the density of BSs increases, the number of interfering BSs increases. Hence, the power of the interfering BSs in this case is counter-balanced by the power of the serving BS. Consequently, average the coverage probability does not depend on the density of the BS as shown in Figure \ref{fig:differentlambda}.

\begin{figure}[!h]
\centering
	\includegraphics[width=9cm,height=6.5cm]{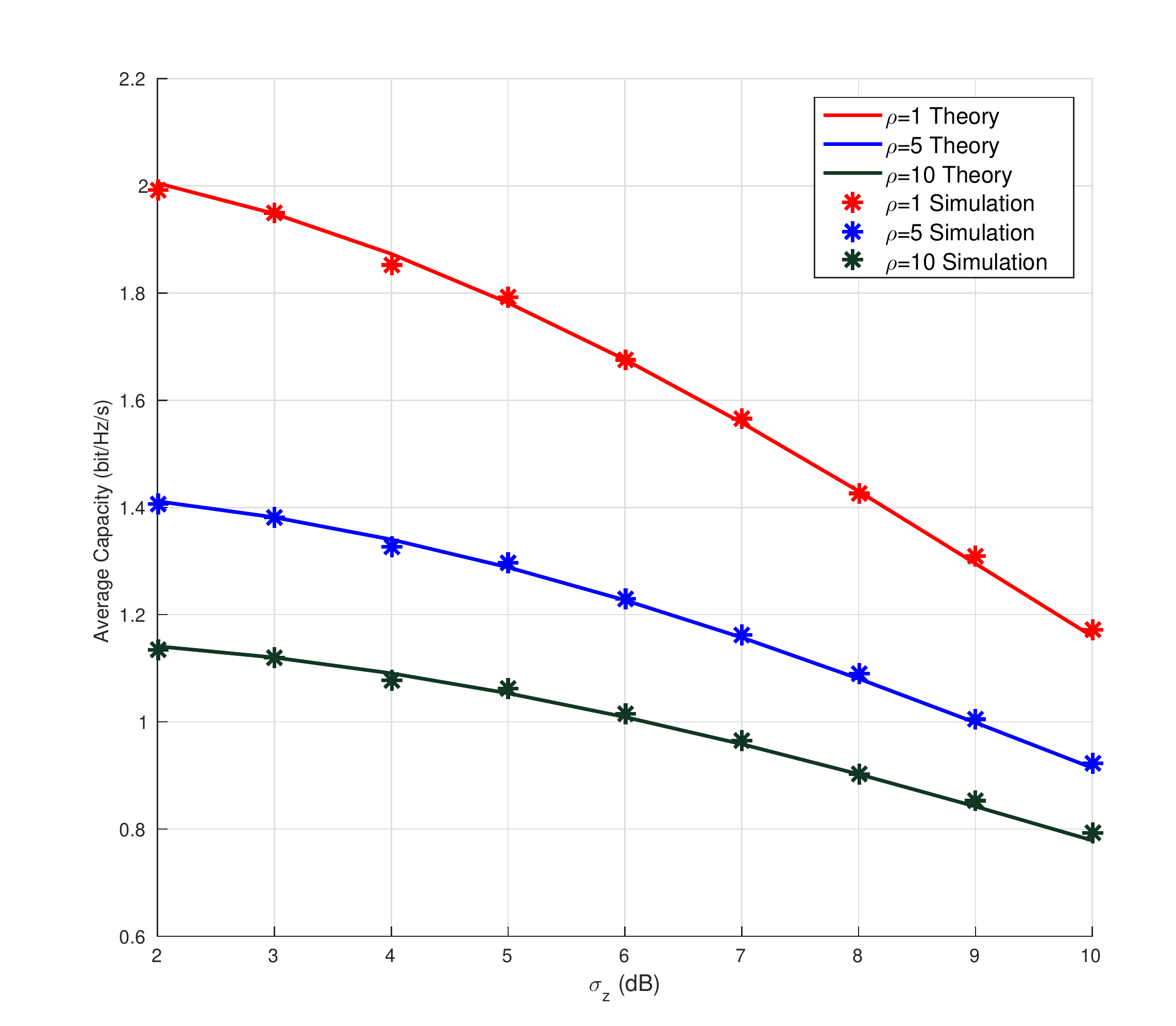}
	\caption{Variation of average capacity with $\sigma_z(dB)$}
	\label{fig:capacity}
\end{figure}

The square of the variance of Lognormal random variable $\sigma_z$, i.e. $\sigma_z^2$, denotes the power of the fading channel. That means if $\sigma_z$ increases, the signal will be more strongly affected by the fading. Hence, the average capacity is inversely proportional to the $\sigma_z$. Figure \ref{fig:capacity} indicates that when the power of fading channel  doubles  from 5 dB to 8 dB, the average data rate reduces by $20.42\%$ from 1.792 to 1.426 (bit/Hz/s) in the case of $\rho=1$, i.e. all BSs have the same transmission power. 

\begin{figure}[tbph]
\centering
\includegraphics[width=9cm,height=6.5cm]{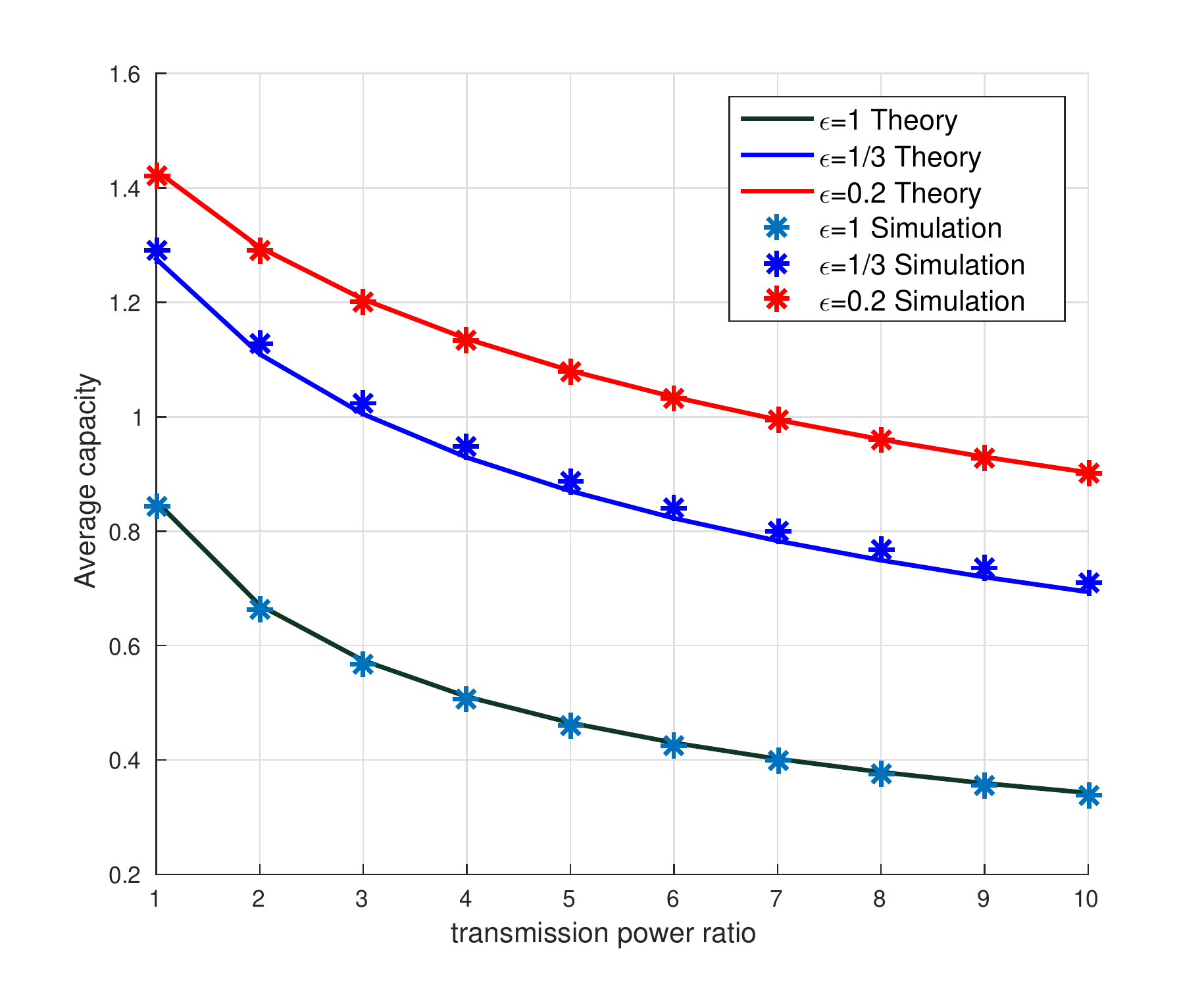}
\caption{Capacity with different values of transmission power ratio between the interfering and serving BSs.}
\label{fig:capacitypowerratio}
\end{figure}

In all simulation results, the power adjustment coefficient of the serving BS $\zeta$ is set to 1 while the coefficient of the interfering BS $\rho$ can take  three values 1, 5 and 10 from Figure \ref{fig:multiSNR} to \ref{fig:capacity} and from 1 to 10 in Figure \ref{fig:capacitypowerratio}. Hence, in this case $\rho$ represents the ratio between the interfering and serving BS of the typical user. The effects of power ratio on user's performance are demonstrated  through the gap between curves with different values of $\rho$ and highlighted in the Table \ref{performancedifferntrho}. 
\begin{table}[!h]
\centering
\small
\begin{tabular}{|l|c|c|c|}
\hline \rule[0ex]{0pt}{2.5ex} Power ratio  & 1  & 5 & 10\\
\hline \rule[0ex]{0pt}{2.5ex} Average coverage probability  & 0.4815  & 0.3770     & 0.3195 \\
							                                &         & (-21.70\%) & (-33.64\%)\\
\hline \rule[0ex]{0pt}{2.5ex} Average capacity				& 1.426   & 1.089      & 0.9037 \\
								                            &         & (-23.63\%)  & (-36.63)\\
\hline 
\end{tabular}
\caption{Performance of user with different values of $\rho$ (SNR = 10, user ratio = 0.2, $\lambda=0.25$)}
\label{performancedifferntrho}
\end{table}

In the Table \ref{performancedifferntrho}, the  negative percentage represents the percentage by which the user's performance, e.g. average coverage probability and average capacity, reduce when compared to those in the case when power ratio equals 1. For example, $-21.70\%$ and $-33.64\%$ mean the average coverage probability decreases by $21.70\%$ and $33.64\%$ when the power ratio increase from 1 to 5 and ends at 10.  

\subsection{The accuracy of simulation}
The accuracy of simulation is represented through the variance of the simulation results which is defined by 
\begin{equation}
var(X)=\frac{1}{NoS}{\sum\limits_{i=1}^{NoS}(x_i-\hat{x}_i)^2}
\label{eq:variance}
\end{equation}
in which 
\begin{itemize}
	\item NoS is the number of simulations
	\item $x_i$ is the simulation result at $i^{th}$ run.
	\item $\hat{x}_i$ is the expected vale of NoS simulation times.\\
	\begin{equation*}
	\hat{x}_i=\frac{1}{NoS}\sum\limits_{i=1}^{NoS}x_i
	\end{equation*}
\end{itemize}
In simulation, the results are obtained by taking the average values from 5 runs , the number of samples in each run is upto $10^5$ (sample). The variances of the  results obtained is shown in figures are presented in Figure \ref{fig:variance}.
\begin{figure}[!h]
\hspace{-1.3cm}
\includegraphics[width=11cm,height=8.5cm]{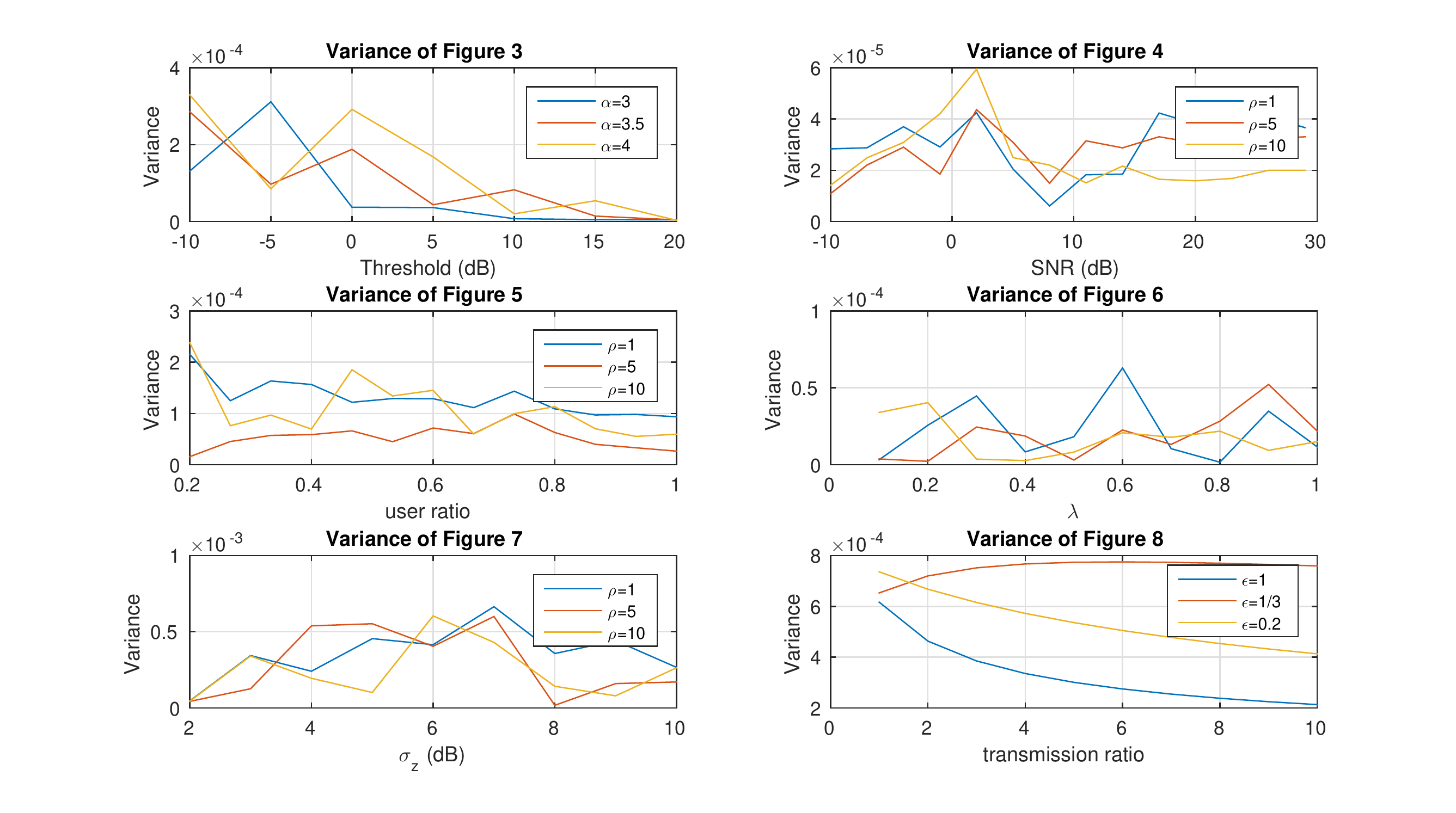}
\caption{Variance of simulation results}
\label{fig:variance}
\end{figure}

It is observed that in all cases, the variance of simulation results are smaller than $10^{-3}$. Hence, it is said that the  results obtained from simulation are accurate and stable.
\section{Conclusion}
In this paper, the performance of the typical user in terms of coverage probability and capacity in the PPP network in Rayleigh-Lognormal fading channel  was  presented. The analytical results for the network with $M$ users $N$ user in each cell are comparable with the corresponding published results for the network with either a user or a RB. Furthermore, the paper assumed that the interfering  and serving BSs have different transmission power. This assumption corresponds to the differences between the transmission power of BSs in different tiers or even in a given tier.
The numerical results show  that when the coverage threshold which represents the sensitivity of UE increased  three times from 0 to 5 dB, the average coverage probability reduces by around 42.2\%. Furthermore, when the power ratio between the transmission power of interfering and serving BS increased from 1 to 5 and ends at, the average capacity of a link reduced by 23.63\% and 36.63\% , respectively.
\section{APPENDIX}
The coverage probability of a typical user, which is located in cell $i$ and served on RB $\delta$, is defined in Equation \ref{coveragadefinition}:
\begin{align*}
&\mathbb{P}(SINR(r)>T_c)=\mathbb{P}\left(\frac{\zeta Pgr^{-\alpha}}{I_\theta+\sigma^2} > T_c \right) \nonumber \\
&  =\mathbb{P}\left(g < \frac{T_cr^\alpha(I_\theta+\sigma^2)}{\zeta P} \right) \nonumber \\
& = \mathbb{E}\left[\sum_{n=1}^{N_p} \frac{\omega_n}{\sqrt{\pi}}\exp\left(-\frac{T_cr^\alpha(I_\theta+\sigma^2)}{\zeta P\gamma(a_n)} \right) \right] \nonumber  \\
& = \sum_{n=1}^{N_p}\frac{w_n}{\sqrt{\pi}} \mathbb{E}\left[\exp\left(-\frac{T_cr^\alpha(I_\theta+\sigma^2)}{\zeta P\gamma(a_n)} \right)  \right]  \nonumber \\
& = \sum_{n=1}^{N_p}\frac{w_n}{\sqrt{\pi}}\exp\left(-\frac{T_cr^\alpha\sigma^2}{\zeta P\gamma(a_n)} \right) \mathbb{E}\left[\exp\left(-\frac{T_cr^\alpha I_\theta}{\zeta P\gamma(a_n)} \right)  \right] \nonumber  \\
&= \sum_{n=1}^{N_p}\frac{\omega_n}{\sqrt{\pi}}\exp\left(-\frac{T_cr^\alpha}{\gamma(a_n)}\frac{1}{\zeta SNR} \right)\mathbb{E}\left(\exp\left(-f(n)I_\theta \right) \right)
\end{align*}
in which $\frac{T_cr^\alpha}{\zeta P\gamma(a_n)}=f(n)$. $SNR$ is the standard transmission power-noise ratio at the base station. Considering the expectation and given that the ICI was defined in Equation \ref{eq:interfernce}
\begin{align*}
 &= \mathbb{E}\left[\exp\left(-f(n)\sum_{u \in \theta}\tau(RB_i=RB_j)\rho Pg_ur_u^{-\alpha} \right)\right] \nonumber\\
& = \mathbb{E}_\theta\left[\mathbb{E}_{g_u}\prod_{u\in\theta}\tau(RB_i=RB_j)\exp\left(- f(n)\rho Pg_ur_u^{-\alpha} \right)  \right] \nonumber \\
& = \mathbb{E}_\theta\left[\prod_{u\in\theta}\mathbb{E}_{g_u}\epsilon \exp\left(- f(n)\rho Pr_u^{-\alpha}g_u \right)  \right]
\end{align*}
Since $g_u$ is Rayleigh-Lognormal fading channel whose MGF is calculated from Equation \ref{eq:MGF}, then
\begin{align}
& = \mathbb{E}_\theta\left[\prod_{u\in\theta}\sum_{n_1=1}^{N_p}\frac{\omega_{n_1}}{\sqrt{\pi}} \frac{\epsilon}{1+\gamma(a_{n_1})f(n)\rho Pr_u^{-\alpha}}\right]
\label{eq:coverdefi}
\end{align}

Using the properties of PPP probability generating function \cite{Stegun1972}
\begin{equation}
=\exp\left(-\frac{\pi\lambda }{N}\left(\sum_{n_1=1}^{N_p}\frac{\omega_{n_1}}{\sqrt{\pi}}\int_{r}^{\infty}2\frac{\gamma(a_{n_1})f(n)\rho Pr_u^{-\alpha}}{1+\gamma(a_{n_1})f(n)\rho Pr_u^{-\alpha}}  \right)r_udr_u  \right) 
\label{expeq}
\end{equation}
Given   that $\frac{T_cr^\alpha}{\zeta P\gamma(a_n)}=f(n)$ and letting $t=\left(\frac{r_u }{r} \right)^2$, $C=T_c\frac{\rho}{\zeta}\frac{\gamma(a_{n_1})}{\gamma(a_n)}$ then the integral becomes 
\begin{align*}
= & r^2\int_{1}^{\infty}\left(1-\frac{1}{1+Ct^{-\alpha/2}} \right)dt \\
= & r^2 \left[\int_{0}^{\infty} \frac{Ct^{-\alpha/2}}{1+Ct^{-\alpha/2}}dt-\int_{0}^{1} \frac{Ct^{-\alpha/2}}{1+Ct^{-\alpha/2}}dt \right] \\
= & r^2 (I_1-I_2)
\end{align*}
Using properties of Gamma function \cite{Stegun1972}, the first integral $I_1$ is obtained by
\begin{align}
I_1=\frac{2}{\alpha}C^{\frac{2}{\alpha}}\frac{\pi}{\sin\left(\frac{\pi(\alpha-2)}{\alpha} \right) }
\label{eq:I1}
\intertext{The second integral $I_2$ is approximated by using Gauss-Legendre rule \cite{Stegun1972}}
I_2=\sum_{m=1}^{N_{GL}}\frac{c_m}{2}\frac{C}{C+\left(\frac{x_m+1}{2} \right)^{\alpha/2}}
\label{eq:I2}
\end{align}
For accurate computation, $N_{GL}=10$ is chosen.
Subsequently, the expectation can be approximated by
\begin{dmath}
 =\exp\left[-\pi\lambda\epsilon r^2 \left( \sum_{n_1=1}^{N_p}\frac{\omega_{n_1}}{\sqrt{\pi}}\left(\frac{2}{\alpha}C^{\frac{2}{\alpha}}\frac{\pi}{\sin\left(\frac{\pi(\alpha-2)}{\alpha} \right) } \\ \qquad+  \sum_{m=1}^{N_{GL}}\frac{c_m}{2}\frac{C}{C+\left(\frac{x_m+1}{2} \right)^{\alpha/2}} \right) \right)\right]  \nonumber \\
 = \exp\left(-\pi\lambda\epsilon r^2 f_{I}(T_c,n) \right) 
\label{eq:singlefinalexp}
\end{dmath}
Substituting Equation \ref{eq:coverdefi} - \ref{eq:singlefinalexp}, the Theorem \ref{theo:singlecoverage} is proved.

\small{
\addcontentsline{toc}{chapter}{Bibliography}
\bibliographystyle{IEEEtran}
\bibliography{refDA}
}
\end{document}